\def\BibTeX{{\rm B\kern-.05em{\sc i\kern-.025em b}\kern-.08em
    T\kern-.1667em\lower.7ex\hbox{E}\kern-.125emX}}
\renewcommand{\vec}[1]{\bm{#1}}
\newcommand{\mat}[1]{\bm{#1}}
\newcommand{\trn}[1]{#1^\intercal}
\newcommand{\ipt}[2]{\trn{#1} #2}
\newcommand{\dsq}[2]{\bigl \lVert #1 - #2 \bigr \rVert^2}
\newcommand{\st}{\operatorname{s.\!t.}}
\newcommand{\amin}[1]{\operatorname*{argmin}_{#1}}
\newtheorem{thm}{Theorem}%[section]
\newtheorem{lem}[thm]{Lemma}
\definecolor{ml2rblu}{rgb}{0.02,0.27,0.45}
\definecolor{ml2ryel}{rgb}{0.98,0.72,0.18}
\definecolor{ml2rgrn}{rgb}{0.50,0.71,0.18}
\definecolor{ml2rtrq}{rgb}{0.00,0.57,0.57}
\begin{document}

\title{QUBOs for Sorting Lists and Building Trees}

\author{\IEEEauthorblockN{Christian Bauckhage}
\IEEEauthorblockA{\textit{Computer Science}\\
\textit{University of Bonn}\\
Bonn, Germany}
\and
\IEEEauthorblockN{Thore Gerlach}
\IEEEauthorblockA{\textit{Media Engineering}\\
\textit{Fraunhofer IAIS}\\
St.~Augustin, Germany}
\and
\IEEEauthorblockN{Nico Piatkowski}
\IEEEauthorblockA{\textit{Media Engineering}\\
\textit{Fraunhofer IAIS}\\
St.~Augustin, Germany}}

\maketitle

\begin{abstract}
We show that the fundamental tasks of sorting lists and building search trees or heaps can be modeled as quadratic unconstrained binary optimization problems (QUBOs). The idea is to understand these tasks as permutation problems and to devise QUBOs whose solutions represent appropriate permutation matrices. We discuss how to construct such QUBOs and how to solve them using Hopfield nets or (adiabatic) quantum computing. In short, we show that neurocomputing methods or quantum computers can solve problems usually associated with abstract data structures. 
\end{abstract}

%\begin{IEEEkeywords}
%problem solving, combinatorial optimization, associative memory, Hopfield networks
%\end{IEEEkeywords}

\section{Introduction}

In this paper, we are concerned with quadratic unconstrained binary optimization problems (QUBOs) of the form
\begin{equation}
\label{eq:QUBO}
\vec{z}_* = \amin{\vec{z} \in \{ 0, 1 \}^N} \, \trn{\vec{z}} \mat{R} \, \vec{z} + \ipt{\vec{r}}{\vec{z}}
\end{equation}
where the objective is to find an optimal vector $\vec{z}_*$ of $N$ binary decision variables and where $\mat{R} \in \mathbb{R}^{N \times N}$ and $\vec{r} \in \mathbb{R}^N$ contain application specific parameters. 

QUBOs are surprisingly versatile and occur in numerous settings \cite{Kochenberger2014-TUB,Lucas2014-IFO,Calude2017-QUBO,Glover2018-ATO,Bauckhage2019-AQF,Muecke2019-LBB,Bauckhage2020-AQC,Date2021-QFF,Biesner2022-SSS}. Notable use cases include RNA folding, budget allocation, portfolio optimization, routing, location planning, or item diversification. More generally, QUBOs arise in clique finding, graph partitioning, satisfiablity testing, data clustering, or classifier training. 

In either case, QUBOs constitute combinatorial optimization problems. Indeed, as they deal with binary decision variables, QUBOs are specific integer programming problems and thus NP-hard in general. Yet they are also isomorphic to Hopfield- or Ising energy minimization problems known from neuro- or adiabtic quantum computing \cite{Hopfield1982-NNA,Farhi2000-QCB,Johnson2011-QAW,Farhi2000-QAOA,Bauckhage2020-PSW}. Since solvers can thus be implemented on emerging, potentially superior platforms such as neuromorphic- or quantum computers, research on the general capabilities and merits of QUBOs is increasing. 

The work reported here falls into this category. We are interested in the universality of QUBOs and explore their use in tasks rarely seen as combinatorial optimization problems. In particular, we consider the basic computer scientific problems of sorting and tree building. Since there exist well established classical algorithms for this purpose, our study is a study of principle. We demonstrate that problems which usually involve the manipulation of abstract data structures can also be expressed as QUBOs of the form in \eqref{eq:QUBO}. This, in turn, establishes that neuromorphic- or quantum computers can sort and build trees.

\subsection{Basic Ideas and Notation}

Throughout, we assume we are given an unordered list or sequence $x_1, x_2, \ldots, x_n$ of $n$ numbers $x_i \in \mathbb{R}$ which we gather in a vector $\vec{x} \in \mathbb{R}^n$. The basic idea is then to determine an $n \times n$ permutation matrix $\mat{P}$ such that the entries of the vector $\vec{y} = \mat{P} \vec{x} \in \mathbb{R}^n$ are arranged in a certain, problem specific order. We therefore recall that a permutation matrix is a square binary matrix whose rows and columns sum to one. Formally, we express these characteristics as
\begin{align}
\mat{P} & \in \{0, 1\}^{n \times n} \\
\mat{P} \,    \vec{1} & = \vec{1} \\
\trn{\mat{P}} \vec{1} & = \vec{1}
\end{align}
where $\vec{1}  \in \mathbb{R}^n$ denotes the vector of all $1$s.

Our main contribution is to show that searching for an appropriate permutation matrix is tantamount to solving a QUBO as in \eqref{eq:QUBO} where $N = n^2$. Our derivation will involve the operation $\vec{v} = \operatorname*{vec}(\mat{M})$ which vectorizes a matrix $\mat{M}$ by stacking its columns into a vector $\vec{v}$. The operation $\mat{M} = \operatorname*{mat}(\vec{v})$ reverses this operation and matricizes $\vec{v}$ into $\mat{M}$. Finally, our derivation will also involve Kronecker products of the $n \times n$ identity matrix $\mat{I}$ and $n$-dimensional vectors; these Kronecker products are denoted by $\otimes$.

\subsection{Overview}

Next, in Section~\ref{sec:sorting}, we devise a QUBO formulation of the sorting problem. This will form the basis for our discussion in Section~\ref{sec:searching} where we set up QUBOs for constructing abstract data structures such as trees and heaps. In Section~\ref{sec:examples}, we present baseline experiments in which we use Hopfield networks to solve sorting and tree building QUBOs. Our results corroborate the viability of the proposed modeling framework. Finally, in Section~\ref{sec:conclusion}, we summarize key findings and discuss potential implications.

\section{QUBOs for Sorting Lists}
\label{sec:sorting}

In this section, we show that the basic computer scientific problem of sorting an unordered list of numbers can be cast as a QUBO. Without loss of generality, we focus on sorting in ascending order. Our discussion will be rather detailed because, once we have established that QUBOs can be used for sorting in ascending order, it will be easy to see how to adapt them to sorting in orders which represent serialized trees or heaps.

To begin with, we gather the given unordered numbers into an $n$-dimensional vector
\begin{equation}
\vec{x} = \trn{[x_1, x_2, \ldots, x_n]}
\end{equation}
This innocuous preparatory step allows us to treat the sorting problem as the problem of finding an $n \times n$ permutation matrix $\mat{P}$ such that the entries of the permuted vector $\vec{y} = \mat{P} \vec{x}$ 
obey $y_1 \leq y_2 \leq \ldots \leq y_n$. 

In order to devise an objective function whose minimization would yield the sought after permutation matrix, we introduce yet another, rather specific $n$-dimensional vector, namely
\begin{equation}
\vec{n} = \trn{[1, 2, \ldots, n]}
\end{equation}

As the entries of this auxiliary vector are characterized by $n_1 \leq n_2 \leq \ldots \leq n_n$, we next appeal to the \emph{rearrangement inequality} of Hardy, Littlewood, and Polya \cite{Hardy1952-I}. This classical result implies that the negated inner product $-\ipt{\vec{y}}{\vec{n}}$ is minimal whenever $y_1 \leq y_2 \leq \ldots \leq y_n$. In other words, the expression $-\ipt{\vec{y}}{\vec{n}}$ is minimal if the entries of $\vec{y}$ and $\vec{n}$ are similarly sorted. Moreover, since we defined $\vec{y} = \mat{P} \vec{x}$, it follows that a permutation matrix $\mat{P}$ which sorts 
the entries of $\vec{x}$ can be found by solving 
\begin{equation}
\label{eq:qubo-step1}
\begin{aligned} 
\mat{P} = \amin{\mat{Z} \in \{0,1\}^{n \times n}} \, & \, -\trn{\vec{x}} \trn{\mat{Z}} \vec{n} \\
       \st \quad & \;\;
       \begin{aligned}
       \mat{Z} \,    \vec{1} & = \vec{1} \\
       \trn{\mat{Z}} \vec{1} & = \vec{1}
      \end{aligned}
\end{aligned}
\end{equation} 

This intermediate (arguably lesser known) result shows that sorting can indeed be understood as an optimization problem \cite{Bauckhage2021-SAL,Aspnes2004-NOL}. However, \eqref{eq:qubo-step1} is a linear programming problem over binary matrices rather than a QUBO over binary vectors. Next, we therefore transform it into a problem of the form in \eqref{eq:QUBO}. This will involve two major steps: First, we rewrite the linear program over binary matrices as a linear program over binary vectors and, second, express the latter as a QUBO.

To rewrite our matrix linear program as a vector linear program, we resort to a lemma which we state and prove in the \hyperref[sec:appendix]{Appendix}. With respect to the matrix vector product $\trn{\mat{Z}} \vec{n}$ which occurs in the objective of \eqref{eq:qubo-step1}, the lemma establishes that vectorizing $\trn{\mat{Z}} \in \{0,1\}^{n \times n}$ into $\vec{z} \in \{0, 1\}^{n^2}$ and matricizing $\vec{n} \in \mathbb{R}^n$ into $\mat{N} \in \mathbb{R}^{n \times n^2}$ by means of
\begin{align}
\vec{z} & = \operatorname*{vec}\,(\mat{Z}) \\
\mat{N} & = \mat{I} \otimes \trn{\vec{n}} \label{eq:matN}
\intertext{provides us with the following identity}
\trn{\mat{Z}} \vec{n} & = \mat{N} \, \vec{z} \label{eq:qubo-step2}
\end{align}
In other words, we can equivalently express the product of the $n \times n$ matrix $\trn{\mat{Z}}$ and the $n$-dimensional vector $\vec{n}$ as a product of an $n \times n^2$ matrix $\mat{N}$ and an $n^2$-dimensional vector $\vec{z}$. 

Similar arguments apply to the expressions in the equality constraints in \eqref{eq:qubo-step1}. That is, we further have
\begin{align}
     \mat{Z} \, \vec{1} & = \mat{C}_r \, \vec{z} \\
\trn{\mat{Z}}   \vec{1} & = \mat{C}_c \, \vec{z}
\intertext{where the $n \times n^2$ matrices $\mat{C}_r$ and $\mat{C}_c$ on the right hand sides capture row- and column sum constraints and are given by}
\mat{C}_r & = \trn{\vec{1}} \otimes \mat{I} \label{eq:matCr} \\
\mat{C}_c & = \mat{I} \otimes \trn{\vec{1}} \label{eq:matCc}
\end{align}

Consequently, we can reformulate the problem of estimating an optimal permutation matrix as the problem of finding an optimal binary vector, namely
\begin{equation}
\label{eq:qubo-step3}
\begin{aligned} 
\vec{z}_* = \amin{\vec{z} \in \{0,1\}^{n^2}} \, & \, -\trn{\vec{x}} \mat{N} \, \vec{z} \\
       \st \quad & \;\;
       \begin{aligned}
       \mat{C}_r \, \vec{z} & = \vec{1} \\
       \mat{C}_c \, \vec{z} & = \vec{1}
      \end{aligned}
\end{aligned}
\end{equation}
 
Given this problem, we note that, once it has been solved for $\vec{z}_*$, the actually sought after permutation matrix can be computed as $\mat{P} = \trn{\operatorname{mat}(\vec{z}_*)}$ to then obtain the sorted version $\vec{y} = \mat{P} \vec{x}$ of $\vec{x}$.

In order to turn the intermediate linear constrained binary optimization problem in \eqref{eq:qubo-step3} into a quadratic unconstrained optimization binary problem, we note the equivalencies
\begin{alignat}{3}
\mat{C}_r \, \vec{z} & = \vec{1} && \;\;\Leftrightarrow\;\; \dsq{\,\mat{C}_r \, \vec{z}}{\vec{1}} & \; = 0 \\
\mat{C}_c \, \vec{z} & = \vec{1} && \;\;\Leftrightarrow\;\; \dsq{\,\mat{C}_c \, \vec{z}}{\vec{1}} & \; = 0
\end{alignat}
and expand the squared Euclidean distances as
\begin{align}
\dsq{\,\mat{C}_r \vec{z}}{\vec{1}} & = \trn{\vec{z}} \trn{\mat{C}_r} \mat{C}_r \vec{z} - 2 \, \trn{\vec{1}} \mat{C}_r \vec{z} + \ipt{\vec{1}}{\vec{1}} \\
\dsq{\,\mat{C}_c \vec{z}}{\vec{1}} & = \trn{\vec{z}} \trn{\mat{C}_c} \mat{C}_c \vec{z} - 2 \, \trn{\vec{1}} \mat{C}_c \vec{z} + \ipt{\vec{1}}{\vec{1}}
\end{align}

Since $\ipt{\vec{1}}{\vec{1}} = n$ is a constant independent of $\vec{z}$, we therefore have the following Lagrangian for the problem in \eqref{eq:qubo-step3}
\begin{align}
L \bigl( \vec{z}, \lambda_r, \lambda_c \bigr) 
= & -\trn{\vec{x}} \mat{N} \, \vec{z} \notag \\
& + \lambda_r \bigl[ \trn{\vec{z}} \trn{\mat{C}_r} \mat{C}_r \vec{z} - 2 \, \trn{\vec{1}} \mat{C}_r \vec{z} \bigr] \notag \\
& + \lambda_c \,\bigl[ \trn{\vec{z}} \trn{\mat{C}_c} \mat{C}_c \vec{z} - 2 \, \trn{\vec{1}} \mat{C}_c \vec{z} \bigr] \\[2ex]
= & \; \trn{\vec{z}} \bigl[ \lambda_r \, \trn{\mat{C}_r} \mat{C}_r + \lambda_c \, \trn{\mat{C}_c} \mat{C}_c \bigr] \vec{z} \notag \\
& \, - \Bigl[ \trn{\vec{x}} \mat{N} + 2 \, \trn{\vec{1}} \bigl[ \lambda_r \, \mat{C}_r + \lambda_c \, \mat{C}_c \bigr] \Bigr] \,  \vec{z} 
\end{align}
Here, $\lambda_r$ and $\lambda_c$ are Lagrange multipliers which we henceforth treat as parameters that have to be set manually. (In section~\ref{sec:examples}, we suggest a simple, problem independent mechanism for this purpose.)

Finally, upon introducing the following $n^2 \times n^2$ matrix and the following $n^2$-dimensional vector
\begin{align}
\mat{R} & = \lambda_r \, \trn{\mat{C}_r} \mat{C}_r + \lambda_c \, \trn{\mat{C}_c} \mat{C}_c \label{eq:matR} \\
\vec{r} & = -\trn{\mat{N}} \vec{x} - 2 \, \trn{\bigl[ \lambda_r \, \mat{C}_r + \lambda_c \, \mat{C}_c \bigr]} \vec{1} \label{eq:vecR}
\end{align}
we find that \eqref{eq:qubo-step3} is equivalent to the following QUBO
\begin{equation}
\label{eq:qubo-step4}
\vec{z}_* = \amin{\vec{z} \in \{ 0, 1 \}^{n^2}} \; \trn{\vec{z}} \mat{R} \, \vec{z} + \ipt{\vec{r}}{\vec{z}} 
\end{equation}

Again, if we could solve \eqref{eq:qubo-step4} for $\vec{z}_*$, the actually sought after permutation matrix is $\mat{P} = \trn{\operatorname{mat}(\vec{z}_*)}$ and allows us to compute the sorted version $\vec{y} = \mat{P} \vec{x}$ of $\vec{x}$.

\section{QUBOs for Building Trees and Heaps}
\label{sec:searching}

\begin{figure}[t]
\centering
\subfloat[binary search tree \label{fig:treexmpl-t}]{%
	\includegraphics[width=0.55\columnwidth]{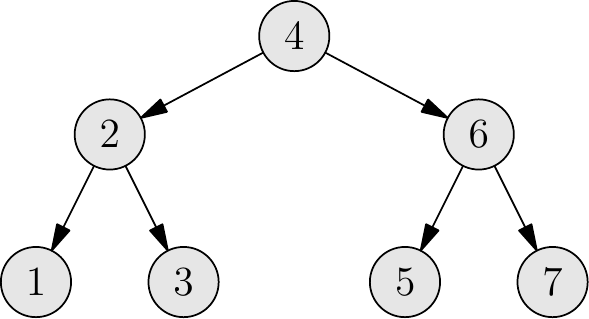}}

\subfloat[serialization of the tree via breadth-first traversal\label{fig:treexmpl-l}]{%
	\includegraphics[width=0.9\columnwidth]{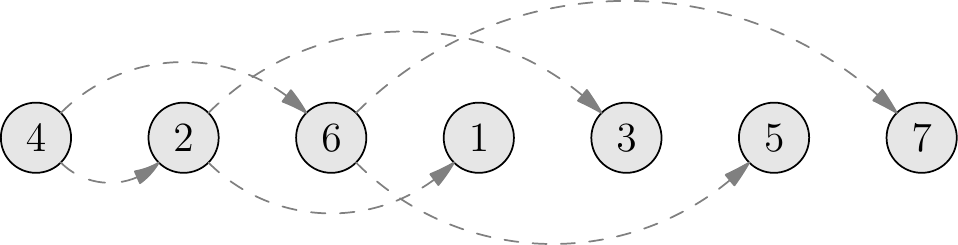}}
\caption{\label{fig:treexmpl} A binary search search tree over the numbers $1, \ldots, 7$.}
\end{figure}

The fact that QUBOs can be used to determine permutation matrices $\mat{P}$ which arrange the entries of a given vector $\vec{x}$ in an order prescribed by another vector $\vec{n}$ has applications beyond conventional sorting. It does, for instance, also allow for arranging the entries of $\vec{x}$ into more abstract data structures such as binary search trees. The general idea behind this claim is best explained by means of an example.

Figure~\ref{fig:treexmpl-t} shows the numbers $1$ through $7$ stored in a binary search tree. Looking at this figure, we recall that such a tree is a labeled directed acyclic graph whose vertices have up to two successors. Moreover, vertex labels of a binary search tree are arranged in such a manner that the label of each internal vertex is greater than the label of its left successor and less than the label of its right successor.

Figure~\ref{fig:treexmpl-l} shows a serialization of the tree in Fig.~\ref{fig:treexmpl-t}. This serialization resulted from a breadth-first traversal of the tree and is structure preserving in the following sense: Letting $v_0, v_1, v_2, \ldots$ denote the serialized vertices (where we deliberately start counting at $0$), the tree can be recovered by choosing the left and right successors of vertex $v_i$ to be $v_j$ and $v_k$ with $j = b i + 1$ and $k = b i + 2$ where $b=2$ denotes the branching factor of the tree. In Fig.~\ref{fig:treexmpl-l}, these implicit successor relations are visualized by means of dashed arrows.

Overall, our example illustrates that binary search trees can be thought of as specifically ordered lists. Hence, sticking with our example, if we wanted to arrange $n=7$ arbitrary numbers $x_1, \ldots, x_7$ into a search tree, we could gather them in a vector $\vec{x} \in \mathbb{R}^7$, consider the auxiliary vector 
\begin{equation}
\vec{n} = \trn{[4, 2, 6, 1, 3, 5, 7]}
\end{equation}
and set up a QUBO as in \eqref{eq:qubo-step4} to determine a permutation matrix that arranges the $x_i$ into the desired tree order.

Of course this approach extends to settings with arbitrary many numbers as well as to trees with branching factors greater than two. It also extends to different labeling schemes and thus to different abstract data structures. We once again illustrate this claim by means of an example.

\begin{figure}[t]
\centering
\subfloat[maximum heap \label{fig:heapxmpl-t}]{%
	\includegraphics[width=0.55\columnwidth]{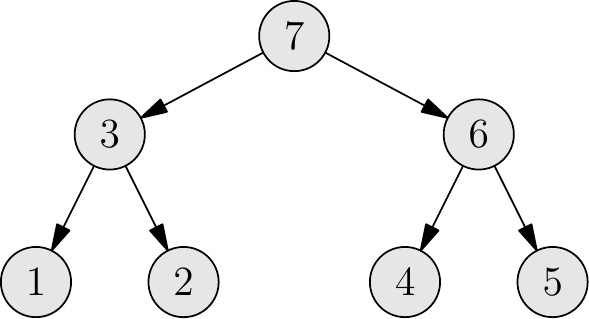}}

\subfloat[serialization of the heap via breadth-first traversal \label{fig:heapxmpl-l}]{%
	\includegraphics[width=0.9\columnwidth]{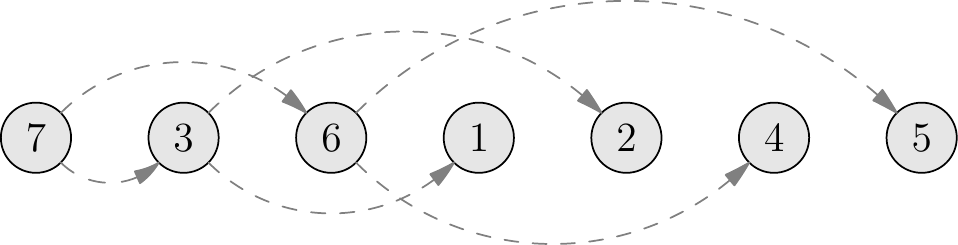}}
\caption{\label{fig:heapxmpl} A maximum heap over the numbers $1, \ldots, 7$.}
\end{figure}

Figure~\ref{fig:heapxmpl-t} shows the numbers $1$ through $7$ arranged in form of a maximum heap. Here, we recall that a maximum heap is a labeled binary tree where the labels of internal vertices are greater than or equal to the labels of their successors. Figure~\ref{fig:heapxmpl-l} shows a serialization of the heap in Fig.~\ref{fig:heapxmpl-t} again obtained from  breadth-first traversal. Since this serialization is once again structure preserving, the above arguments reapply.

Hence, if we wanted to arrange $n=7$ arbitrary numbers $x_1, \ldots, x_7$ into a maximum heap, we could gather them in a vector $\vec{x} \in \mathbb{R}^7$, consider the auxiliary vector 
\begin{equation}
\vec{n} = \trn{[7, 3, 6, 1, 2, 4, 5]}
\end{equation}
and set up a QUBO as in \eqref{eq:qubo-step4} to determine a permutation matrix that arranges the $x_i$ into the desired heap order.

The two tree building examples in this section thus reveal the sorting QUBO in \eqref{eq:qubo-step4} to be a general purpose tool for ordering or structuring problems. In a sense, it can actually be understood as a \emph{programmable machine}. The input variables it processes are contained in vector $\vec{x}$ and the program it executes is given by vector $\vec{n}$. Different programs, i.e.~different choices of $\vec{n}$, will cause the \emph{programmable} QUBO to produce different results of different utility. 
%Next, we demonstrate how to implement and run this machinery using Hopfield networks.

\section{Practical Examples}
\label{sec:examples}

In this section, we present simple experiments which verify that the above theory can be put into practice.

We first address the open question of how to choose the two free parameters in equations \eqref{eq:matR} and \eqref{eq:vecR}. We then recap our construction of sorting or reordering QUBOs over binary variables and recall how to turn those into QUBOs over bipolar variables. These can be solved using (adiabatic) quantum computers or Hopfield nets and we apply the latter for sorting, tree building, and heap building. 

\begin{figure*}[t!]
\centering
\subfloat[state evolution of a Hopfield net that solves a sorting QUBO]{%
\scriptsize
\begin{tabular}{rcr}
$t$ & $\vec{s}_t$ & $E \bigl( \vec{s}_t \bigr)$ \\
\midrule
$   0$ & $- - - - - - - - - - - - - - - - - - - - - - - - - - - - - - - - - - - - - - - - - - - - - - - - -$  & $-673.5$ \\
$   1$ & $- - - - - - - - - - - - - + - - - - - - - - - - - - - - - - - - - - - - - - - - - - - - - - - - -$  & $-689.1$ \\
$   2$ & $- - - - - - - - - - - - - + - - - - - - - - - - - - - - - - - - - - - - - - - - + - - - - - - - -$  & $-704.4$ \\
$   3$ & $- - - - + - - - - - - - - + - - - - - - - - - - - - - - - - - - - - - - - - - - + - - - - - - - -$  & $-719.4$ \\
$   4$ & $- - - - + - - - - - - - - + - - - - - - - - - - + - - - - - - - - - - - - - - - + - - - - - - - -$  & $-734.0$ \\
$   5$ & $- - - - + - - - - - - - - + - - - - - - - - - - + - - - - - - - - - - - - - - - + - - - + - - - -$  & $-748.3$ \\
$   6$ & $- - - - + - - - - - - - - + - - - - - - - - - - + - - - - + - - - - - - - - - - + - - - + - - - -$  & $-762.4$ \\
$   7$ & $- - - - + - - - - - - - - + + - - - - - - - - - + - - - - + - - - - - - - - - - + - - - + - - - -$  & $-776.4$ \\
$   8$ & $- - - - + - - - - - - - - + + - - - - - - - - - + - - - - + - - - - - - - - - - + - - - + - - - -$  & $-776.4$ \\
%$   9$ & $- - - - + - - - - - - - - + + - - - - - - - - - + - - - - + - - - - - - - - - - + - - - + - - - -$  & $-776.4$ \\
\end{tabular}}

\subfloat[state evolution of a Hopfield net that solves a tree building QUBO]{%
\scriptsize
\begin{tabular}{rcr}
$t$ & $\vec{s}_t$ & $E \bigl( \vec{s}_t \bigr)$ \\
\midrule
$   0$ & $- - - - - - - - - - - - - - - - - - - - - - - - - - - - - - - - - - - - - - - - - - - - - - - - -$  & $-673.5$ \\
$   1$ & $- - - - - - - - - - - - - + - - - - - - - - - - - - - - - - - - - - - - - - - - - - - - - - - - -$  & $-689.1$ \\
$   2$ & $- - - - - - - - - - - - - + - - - - - - - - - - - - - - - - - - - - - - - + - - - - - - - - - - -$  & $-704.4$ \\
$   3$ & $- - - - - + - - - - - - - + - - - - - - - - - - - - - - - - - - - - - - - + - - - - - - - - - - -$  & $-719.4$ \\
$   4$ & $- - - - - + - - - - - - - + - - - - - - - + - - - - - - - - - - - - - - - + - - - - - - - - - - -$  & $-734.0$ \\
$   5$ & $- - - - - + - - - - - - - + - - - - - - - + - - - - - - - - - - - - - - - + - - - - - - - - + - -$  & $-748.3$ \\
$   6$ & $- - - - - + - - - - - - - + - - - - - - - + - - - - - - - + - - - - - - - + - - - - - - - - + - -$  & $-762.4$ \\
$   7$ & $- - - - - + - - - - - - - + - - - + - - - + - - - - - - - + - - - - - - - + - - - - - - - - + - -$  & $-776.4$ \\
$   8$ & $- - - - - + - - - - - - - + - - - + - - - + - - - - - - - + - - - - - - - + - - - - - - - - + - -$  & $-776.4$ \\
%$   9$ & $- - - - - + - - - - - - - + - - - + - - - + - - - - - - - + - - - - - - - + - - - - - - - - + - -$  & $-776.4$ \\
\end{tabular}}

\subfloat[state evolution of a Hopfield net that solves a heap building QUBO]{%
\scriptsize
\begin{tabular}{rcr}
$t$ & $\vec{s}_t$ & $E \bigl( \vec{s}_t \bigr)$ \\
\midrule
$   0$ & $- - - - - - - - - - - - - - - - - - - - - - - - - - - - - - - - - - - - - - - - - - - - - - - - -$  & $-673.5$ \\
$   1$ & $- - - - - - - + - - - - - - - - - - - - - - - - - - - - - - - - - - - - - - - - - - - - - - - - -$  & $-689.1$ \\
$   2$ & $- - - - - - - + - - - - - - - - - - - - - - - - - - - - - - - - - - - - - + - - - - - - - - - - -$  & $-704.4$ \\
$   3$ & $- - - - - - + + - - - - - - - - - - - - - - - - - - - - - - - - - - - - - + - - - - - - - - - - -$  & $-719.4$ \\
$   4$ & $- - - - - - + + - - - - - - - - - - - - - - - - - - + - - - - - - - - - - + - - - - - - - - - - -$  & $-734.0$ \\
$   5$ & $- - - - - - + + - - - - - - - - - - - - - - - - - - + - - - - - - - - - - + - - - - - + - - - - -$  & $-748.3$ \\
$   6$ & $- - - - - - + + - - - - - - - - - - - - - - - - - - + - - - - - + - - - - + - - - - - + - - - - -$  & $-762.4$ \\
$   7$ & $- - - - - - + + - - - - - - - - - + - - - - - - - - + - - - - - + - - - - + - - - - - + - - - - -$  & $-776.4$ \\
$   8$ & $- - - - - - + + - - - - - - - - - + - - - - - - - - + - - - - - + - - - - + - - - - - + - - - - -$  & $-776.4$ \\
%$   9$ & $- - - - - - + + - - - - - - - - - + - - - - - - - - + - - - - - + - - - - + - - - - - + - - - - -$  & $-776.4$ \\
\end{tabular}}
\caption{\label{fig:runs} Visualizations of the evolution of states and energies of Hopfield nets of $N = n^2$ neurons which solve sorting or reordering QUBOs for $\vec{x} \in \mathbb{R}^n$. In each example, $n=7$ and all $N=49$ neurons are initially inactive. Over time $t$, the state $\vec{s}_t$ of the networks changes, i.e.~neurons switch from inactive ($-$) to active ($+$), and the energy $E(\vec{s}_t)$ decreases. Each process converges within only $O(n)$ steps to a stable state which encodes the solution to the respective problem.}
\end{figure*}

\subsection{On Choosing $\lambda_r$ and $\lambda_c$}

Working with parameterized QUBOs requires experience or guidelines as to suitable parameterizations. With regard to the Lagrange parameters $\lambda_r$ and $\lambda_c$ in \eqref{eq:matR} and \eqref{eq:vecR}, we therefore note that they should be chosen such that the objective function in \eqref{eq:qubo-step4} is balanced. This is to say that the contribution of the term $-\trn{\vec{x}} \mat{N} \, \vec{z}$ should not outweigh the sum-to-one constraints $\mat{C}_r \, \vec{z} = \vec{1}$ and $\mat{C}_c \, \vec{z} = \vec{1}$. Since this may happen whenever the sum (of the norms) of the entries of $\vec{x}$ is larger than $n$, a simple idea is to work with an $L_1$-normalized version $\vec{x}'$ of $\vec{x}$ where
\begin{equation}
\label{eq:normalization}
x_i' = \frac{x_i}{\sum_j \lvert x_j \rvert}
\end{equation}
Using this normalization of the problem input, we may choose
\begin{equation}
\label{eq:parameters}
\lambda_r = \lambda_c = n
\end{equation}
to appropriately weigh the individual components of the overall minimization objective.

\subsection{Setting Up Sorting / Ordering QUBOs}

Given an input vector $\vec{x}$ whose entries are to be rearranged into a certain order, we require an auxiliary vector $\vec{n}$ whose entries reflect that order. We then normalize $\vec{x}$ using \eqref{eq:normalization} and set $\lambda_r$ and $\lambda_c$ according to \eqref{eq:parameters}. Given these preparations, we compute $\mat{N}$, $\mat{C}_r$, and $\mat{C}_c$ using \eqref{eq:matN}, \eqref{eq:matCr}, and \eqref{eq:matCc}. This finally allows for computing $\mat{R}$ and $\vec{r}$ in \eqref{eq:matR} and \eqref{eq:vecR} which parameterize the QUBO in \eqref{eq:qubo-step4}. Note, however, that $\vec{r}$ is now computed with respect to $\vec{x}'$ instead of $\vec{x}$.

\subsection{From Binary to Bipolar QUBOs}

Recall that binary and bipolar vectors are affine isormorphic. That is, if $\vec{z} \in \{0,1\}^N$ is binary, then
$\vec{s} = 2 \, \vec{z} - \vec{1}$ is bipolar. Likewise, if $\vec{s} \in \{-1,+1\}^N$ is bipolar, then $\vec{z} = (\vec{s} + \vec{1}) / 2$ is binary. QUBOs as in \eqref{eq:qubo-step4} can therefore also be expressed as 
\begin{align}
\label{eq:Ising}
\vec{s}_* & = \amin{\vec{s} \in \{ \pm 1 \}^N} \, \trn{\vec{s}} \mat{Q} \, \vec{s} + \ipt{\vec{q}}{\vec{s}}
\end{align}
where $\mat{Q} = \tfrac{1}{4} \, \mat{R}$ and $\vec{q} = \tfrac{1}{2} \, \mat{R} \, \vec{1} + \tfrac{1}{2} \, \vec{r}$. This is of considerable interest because \eqref{eq:Ising} is an Ising energy minimization problem that can be solved on adiabatic quantum computers or --using the quantum approximate optimization algorithm-- on quantum gate computers. 

\subsection{Solving Sorting / Ordering QUBOs with Hopfield Nets}

Substituting $\mat{W} = -2 \, \mat{Q}$ and $\vec{\theta} = \vec{q}$, we may alternatively consider
\begin{align}
\label{eq:Hopfield}
\vec{s}_* & = \amin{\vec{s} \in \{ \pm 1 \}^N} -\tfrac{1}{2} \, \trn{\vec{s}} \mat{W} \vec{s} + \ipt{\vec{\theta}}{\vec{s}}
\end{align}
where $\mat{W}$ and $\vec{\theta}$ can be thought of the weight matrix and bias vector of a Hopfield network. Hence, the minimization objective in \eqref{eq:Hopfield} now defines the energy of a Hopfield network in state $\vec{s}$. 

As Hopfield energies go, the energy landscape of a Hopfield net for sorting or reordering is fairly well behaved. That is, it does not suffer from (spurious) local minima which reflects the fact that sorting or reordering do not constitute hard problems. 

When running a Hopfield net for sorting or reordering, we may thus consider a steepest energy descent mechanism 
%\begin{align}
%u & = \amin{j} \bigl[ -\nabla E(\vec{s}_t) \bigr]_j \\
%  & = \amin{j} \bigl[ \mat{W} \, \vec{s}_t + \vec{\theta} \bigr]_j \\
%\bigl[ \vec{s}_{t+1} \bigr]_j & = 
%\begin{cases}
%\operatorname{sign} \Bigl( \bigl[ \mat{W} \, \vec{s}_t + \vec{\theta} \bigr]_u \Bigr) & \text{if} \; j = u \\
%\bigl[ \vec{s}_t \bigr]_j & \text{otherwise}
%\end{cases}
%\end{align}
to update the network state $\vec{s}_t$ in iteration $t$ \cite{Bauckhage2021-HnetSort}. Our results below indicate that this yields fast convergence to a state which represents the solution to the problem at hand.

\subsection{Exemplary Results}

Just as in the previous section, we focus on simple settings and experiment with $n=7$ numbers. In particular, we consider 
\begin{equation}
\vec{x} = \trn{[46, 52, -12, 33, 10, 51, 24]}
\end{equation} 

In order to sort these numbers in ascending order and to arrange them into a binary tree or maximum heap, we work with the auxiliary vectors
\begin{align}
\vec{n}_S & = \trn{[1, 2, 3, 4, 5, 6, 7]} \\
\vec{n}_T & = \trn{[4, 2, 6, 1, 3, 5, 7]} \\
\vec{n}_H & = \trn{[7, 3, 6, 1, 2, 4, 5]}
\end{align}
and prepare corresponding Hopfield nets. In each experiment, we set the initial network state to $\vec{s}_0 = - \vec{1}$ and then run the steepest energy descent update mechanism.

Figure~\ref{fig:runs} visualizes the evolution of the respective Hopfield nets. It shows that each network rapidly converges to a stable state $\vec{s}_\infty$. In fact, all three networks converge within $O(n)$ updates which once again indicates that neither sorting nor reordering are difficult problems.

Moreover, the stable states the networks converge to encode the optimal solution $\vec{s}_*$ to the respective sorting, tree-, or heap building QUBO. That is, all three stable states encode a permutation matrix $\mat{P} = \trn{\operatorname{mat}\bigl( (\vec{s}_* + \vec{1}) / 2 \bigr)}$ which permutes $\vec{x}$ into the desired order. 
 
To be specific, we obtain permutation matrices $\mat{P}_S$, $\mat{P}_T$, and $\mat{P}_H$ which produce 
\begin{alignat}{2}
\vec{y}_S & = \mat{P}_S \, \vec{x} && = \trn{[-12, 10, 24, 33, 46, 51, 52]} \\
\vec{y}_T & = \mat{P}_T \, \vec{x} && = \trn{[33, 10, 51, -12, 24, 46, 52]} \\
\vec{y}_H & = \mat{P}_H \, \vec{x} && = \trn{[52, 24, 51, -12, 10, 33, 46]}
\end{alignat}
and thus correctly solve the problems we considered.

\section{Conclusion}
\label{sec:conclusion}

This paper resulted from ongoing research in which we ask: How universal is quadratic unconstrained binary optimization? 

Examining the use of QUBOs for problems not commonly seen as combinatorial, we looked at the fundamental tasks of sorting lists and building trees. While there exists numerous established and efficient algorithms for these problems, our interest was in re-conceptualizing them because, if they can be modeled as QUBOs, it is clear that they can be solved using neuromputing techniques or quantum computing.

Our key contribution was to show that sorting and tree building can indeed be cast as QUBOs. 

The basic idea was to simply understand these problems as permutation problems and to devise objective functions whose minimization results in appropriate permutation matrices. We first appealed to the rearrangement inequality and expressed sorting or reordering as linear programming problems over binary matrices. Using linear algebraic arguments and standard tools from optimization theory, we then showed how to rewrite them as linear program over binary vectors and, subsequently, as QUBOs.

Experiments demonstrated that the QUBO we derived in \eqref{eq:qubo-step4} provides a general purpose model for ordering or structuring problems. In a sense, it can be seen as a \emph{programmable machine}. The input this machine processes consists of a vector of numbers to be be reordered; the program it executes is given by another vector which dictates the order to be produced. As such orders may represent trees or heaps or other kinds of data types, QUBOs and, consequently, neuromorphic- or quantum computers can manipulate abstract data structures.

\appendix
\section{Appendix}
\label{sec:appendix}

In this short Appendix, we state and prove the ``vectorization lemma'' which we used in Section~\ref{sec:sorting}.

\begin{lem}
\label{lem:1}
Consider a matrix $\mat{M} \in \mathbb{R}^{n \times n}$, a vector $\vec{v} \in \mathbb{R}^n$, and the matrix-vector product $\trn{\mat{M}} \vec{v}$.

There exist a vector $\vec{m} \in \mathbb{R}^{n^2}$ and a matrix $\mat{V} \in \mathbb{R}^{n \times n^2}$ such that
\begin{equation*}
\trn{\mat{M}} \vec{v} = \mat{V} \vec{m}
\end{equation*}
In particular, this claim hods true for
\begin{align*}
\vec{m} & = \operatorname*{vec}\,(\mat{M}) \\
\mat{V} & = \mat{I} \otimes \trn{\vec{v}}
\end{align*}
where $\vec{m}$ contains the stacked columns of $\mat{M}$, $\mat{I}$ is the $n \times n$ identity matrix, and $\otimes$ denotes the Kronecker product.
\end{lem}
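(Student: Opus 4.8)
The plan is to verify the claimed identity $\trn{\mat{M}} \vec{v} = \mat{V} \vec{m}$ directly by comparing the two sides entrywise, which reduces the lemma to the standard fact that column-stacking interacts predictably with the Kronecker product. First I would introduce explicit index notation: write $\mat{M}$ with entries $M_{ij}$ (row $i$, column $j$), so that the left-hand side has $i$-th component $(\trn{\mat{M}} \vec{v})_i = \sum_{k=1}^n M_{ki} v_k$, since transposing swaps the roles of the two indices. This is the quantity I must reproduce from the right-hand side.

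Next I would unpack the right-hand side. The vectorization $\vec{m} = \operatorname*{vec}(\mat{M})$ stacks the columns of $\mat{M}$, so its entry indexed by the pair $(j,k)$ — meaning the $k$-th entry inside the $j$-th stacked column — equals $M_{kj}$. Then I would spell out the Kronecker product $\mat{V} = \mat{I} \otimes \trn{\vec{v}}$: it is an $n \times n^2$ block matrix whose $(i,j)$ block is $I_{ij} \, \trn{\vec{v}}$, i.e.~the $1 \times n$ row $\trn{\vec{v}}$ when $i = j$ and a zero row otherwise. Multiplying this against $\vec{m}$ in block form, the $i$-th component of $\mat{V}\vec{m}$ picks out only the $j = i$ block, giving $\trn{\vec{v}}$ applied to the $i$-th stacked column of $\mat{M}$, namely $\sum_{k=1}^n v_k M_{ki}$. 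This matches the left-hand side componentwise, completing the argument. Finally I would note the dimensions are consistent: $\mat{I} \otimes \trn{\vec{v}}$ is $(n \cdot n) \times (n \cdot 1) = n \times n^2$ and $\vec{m} \in \mathbb{R}^{n^2}$, so $\mat{V}\vec{m} \in \mathbb{R}^n$ as required, and the existence claim follows from the explicit construction.

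The only real subtlety — and the step most prone to an off-by-transpose error — is keeping the bookkeeping of $\operatorname*{vec}$ aligned with the block structure of the Kronecker product. Because the lemma involves $\trn{\mat{M}}$ rather than $\mat{M}$, one must be careful that the index $k$ summed over in $\sum_k M_{ki} v_k$ is exactly the within-block index of $\operatorname*{vec}(\mat{M})$, while the surviving block index $i$ plays the role of the output coordinate. I expect the cleanest way to avoid this pitfall is to appeal to the well-known Roth/vec identity $\operatorname*{vec}(\mat{A}\mat{X}\mat{B}) = (\trn{\mat{B}} \otimes \mat{A}) \operatorname*{vec}(\mat{X})$, or rather its special case $\operatorname*{vec}(\trn{\vec{v}} \mat{M}) = (\mat{I} \otimes \trn{\vec{v}})\operatorname*{vec}(\mat{M})$ after recognizing that the scalar-valued products assemble into $\trn{\mat{M}}\vec{v}$; this packages the entire entrywise verification into a single citation and sidesteps the manual index chase entirely.
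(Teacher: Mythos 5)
Your proposal is correct and follows essentially the same route as the paper's own proof: both verify the identity componentwise by noting that the $i$-th entry of $\trn{\mat{M}}\vec{v}$ is the inner product of $\vec{v}$ with the $i$-th column of $\mat{M}$, and that the $i$-th block row of $\mat{I}\otimes\trn{\vec{v}}$ applied to $\operatorname*{vec}(\mat{M})$ extracts exactly that same inner product. (Minor slip: the Kronecker product dimensions should be written $(n\cdot 1)\times(n\cdot n)$, though your conclusion $n\times n^2$ is right.)
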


\begin{proof}
Let $\vec{r} = \trn{\mat{M}} \vec{v}$. Written in terms of the columns $\vec{m}_i$ of matrix $\mat{M} = [\vec{m}_1, \ldots, \vec{m}_n]$, the entries $r_i$ of vector $\vec{r}$ are given by $r_i = \ipt{\vec{m}_i}{\vec{v}}$ or, equivalently, $r_i  = \ipt{\vec{v}}{\vec{m}_i}$.

Introducing (much) larger vectors $\vec{m} \in \mathbb{R}^{n^2}$ and $\vec{u}_i \in \mathbb{R}^{n^2}$ where $\vec{m} = \operatorname*{vec}\,(\mat{M})$ and 
\begin{equation*}
\trn{\vec{u}_i} = \bigl[ \underbrace{\trn{\vec{0}} \, \cdots \, \trn{\vec{0}}}_{i-1 \text{ times}} \, \trn{\vec{v}} \, \underbrace{\trn{\vec{0}} \, \cdots \, \trn{\vec{0}}}_{n-i \text{ times}} \bigr] 
\end{equation*}
with $\vec{0} \in \mathbb{R}^n$ the vector of all $0$s, we can equivalently write $r_i = \ipt{\vec{u}_i}{\vec{m}}$.

Hence, when gathering the $i = 1, \ldots, n$ different vectors $\trn{\vec{u}_i}$ as the rows of a matrix $\mat{V} \in \mathbb{R}^{n \times n^2}$ such that
\begin{align*}
\mat{V} & = 
\begin{bmatrix}
\trn{\vec{v}} & \trn{\vec{0}} & \trn{\vec{0}} & \cdots & \trn{\vec{0}} & \trn{\vec{0}} \\
\trn{\vec{0}} & \trn{\vec{v}} & \trn{\vec{0}} & \cdots & \trn{\vec{0}} & \trn{\vec{0}} \\
\vdots & \vdots & \vdots & \ddots & \vdots & \vdots \\
\trn{\vec{0}} & \trn{\vec{0}} & \trn{\vec{0}} & \cdots & \trn{\vec{0}} & \trn{\vec{v}}
\end{bmatrix}
\end{align*}
we have $\mat{V} = \mat{I} \otimes \trn{\vec{v}}$ and observe that $\vec{r}$ can also be computed as $\vec{r} = \mat{V} \vec{m}$.
\end{proof}

\section*{Acknowledgments}

The authors gratefully acknowledge financial support by the Competence Center for Machine Learning Rhine-Ruhr (ML2R) which is funded by the Federal Ministry of Education and Research of Germany (grant no. 01IS18038C). 

\bibliographystyle{IEEEtran}
\bibliography{literature}

% Generated by IEEEtran.bst, version: 1.14 (2015/08/26)
\begin{thebibliography}{10}
\providecommand{\url}[1]{#1}
\csname url@samestyle\endcsname
\providecommand{\newblock}{\relax}
\providecommand{\bibinfo}[2]{#2}
\providecommand{\BIBentrySTDinterwordspacing}{\spaceskip=0pt\relax}
\providecommand{\BIBentryALTinterwordstretchfactor}{4}
\providecommand{\BIBentryALTinterwordspacing}{\spaceskip=\fontdimen2\font plus
\BIBentryALTinterwordstretchfactor\fontdimen3\font minus
  \fontdimen4\font\relax}
\providecommand{\BIBforeignlanguage}[2]{{%
\expandafter\ifx\csname l@#1\endcsname\relax
\typeout{** WARNING: IEEEtran.bst: No hyphenation pattern has been}%
\typeout{** loaded for the language `#1'. Using the pattern for}%
\typeout{** the default language instead.}%
\else
\language=\csname l@#1\endcsname
\fi
#2}}
\providecommand{\BIBdecl}{\relax}
\BIBdecl

\bibitem{Kochenberger2014-TUB}
G.~Kochenberger, J.-K. Hao, F.~Glover, M.~Lewis, Z.~L{\"u}, H.~Wang, and
  Y.~Wang, ``{The Unconstrained Binary Quadratic Programming Problem: A
  Survey},'' \emph{J. of Combinatorial Optimization}, vol.~28, no.~1, 2014.

\bibitem{Lucas2014-IFO}
A.~Lucas, ``{Ising Formulations of Many NP Problems},'' \emph{Frontiers in
  Physics}, vol.~2, no.~5, 2014.

\bibitem{Calude2017-QUBO}
C.~Calude, M.~Dinneen, and R.~Hua, ``{QUBO Formulations for the Graph
  Isomorphism Problem and Related Problems},'' \emph{Theoretical Computer
  Science}, vol. 701, 2017.

\bibitem{Glover2018-ATO}
F.~Glover, G.~Kochenberger, and Y.~Du, ``{A Tutorial on Formulating and Using
  QUBO Models},'' \emph{arXiv:1811.11538 [cs.DS]}, 2018.

\bibitem{Bauckhage2019-AQF}
C.~Bauckhage, N.~Piatkowski, R.~Sifa, D.~Hecker, and S.~Wrobel, ``{A QUBO
  Formulation of the k-Medoids Problem},'' in \emph{Proc. KDML-LWDA}, 2019.

\bibitem{Muecke2019-LBB}
S.~M\"ucke, N.~Piatkowski, and K.~Morik, ``{Learning Bit by Bit: Extracting the
  Essence of Machine Learning},'' in \emph{Proc. KDML-LWDA}, 2019.

\bibitem{Bauckhage2020-AQC}
C.~Bauckhage, R.~Sifa, and S.~Wrobel, ``{Adiabatic Quantum Computing for
  Max-Sum Diversication},'' in \emph{Proc. SDM}.\hskip 1em plus 0.5em minus
  0.4em\relax SIAM, 2020.

\bibitem{Date2021-QFF}
P.~Date, D.~Arthur, and L.~Pusey-Nazzaro, ``{QUBO Formulations for Training
  Machine Learning Models},'' \emph{Scientific Reports}, vol.~11, no.~1, 2020.

\bibitem{Biesner2022-SSS}
D.~Biesner, R.~Sifa, and C.~Bauckhage, ``{Solving Subset Sum Problems using
  Binary Optimization with Applications in Auditing and Financial Data
  Analysis},'' \emph{TechRxiv preprint 18994160.v1}, 2022.

\bibitem{Hopfield1982-NNA}
J.~Hopfield, ``{Neural Networks and Physical Systems with Emergent Collective
  Computational Abilities},'' \emph{PNAS}, vol.~79, no.~8, 1982.

\bibitem{Farhi2000-QCB}
E.~Farhi, J.~Goldstone, S.~Gutmann, and M.~Sipser, ``{Quantum Computation by
  Adiabatic Evolution},'' \emph{arXiv:quant-ph/0001106}, 2000.

\bibitem{Johnson2011-QAW}
M.~Johnson and {et al.}, ``{Quantum Annealing with Manufactured Spins},''
  \emph{Nature}, vol. 473, no. 7346, 2011.

\bibitem{Farhi2000-QAOA}
E.~Farhi, J.~Goldstone, and S.~Gutmann, ``{A Quantum Approximate Optimization
  Algorithm},'' \emph{arXiv:1411.4028 [quant-ph]}, 2014.

\bibitem{Bauckhage2020-PSW}
C.~Bauckhage, R.~Sanchez, and R.~Sifa, ``{Problem Solving with Hopfield
  Networks and Adiabatic Quantum Computing},'' in \emph{Proc. IJCNN}.\hskip 1em
  plus 0.5em minus 0.4em\relax IEEE, 2020.

\bibitem{Hardy1952-I}
G.~Hardy, J.~Littlewood, and G.~Polya, \emph{Inequalities}.\hskip 1em plus
  0.5em minus 0.4em\relax Cambridge University Press, 1952.

\bibitem{Bauckhage2021-SAL}
C.~Bauckhage and P.~Welke, \emph{{ML2R Coding Nuggets: Sorting as Linear
  Programming}}, ML and AI Lab, Computer Science, University of Bonn, April
  2021.

\bibitem{Aspnes2004-NOL}
J.~Aspnes, \emph{{Notes on Linear Programming}}, Dept. of Computer Science,
  Yale University, April 2004.

\bibitem{Bauckhage2021-HnetSort}
C.~Bauckhage and P.~Welke, \emph{{ML2R Coding Nuggets: Hopfield Nets for
  Sorting}}, ML and AI Lab, Computer Science, University of Bonn, April 2021.

\end{thebibliography}

\end{document}